\documentclass[aps,pra,reprint,twoside]{revtex4-2}
\usepackage[english]{babel}

\usepackage{graphicx}% Include figure files
\usepackage{dcolumn}% Align table columns on decimal point
\usepackage{bm}% bold math
\usepackage{physics}
\usepackage{amsmath, amssymb, amsfonts, amsthm, dsfont}
\usepackage{svg}
\usepackage{color}
\usepackage{url}
\usepackage[hidelinks]{hyperref}
\usepackage[capitalise]{cleveref}
\usepackage{microtype}
\usepackage[normalem]{ulem}
\usepackage{stmaryrd}
\usepackage{multirow}
\usepackage{booktabs}

\definecolor{darkblue}{RGB}{50,10,180}
\definecolor{orangef}{RGB}{210,100,20}

\newcommand{\id}{\mathds{1}}

\newtheorem{Definition}{Definition}

\newtheorem{Proposition}{Proposition}
\newtheorem*{Example}{Ex}

\begin{document}

\title{%Local description of the Tsirelson point \jd{[might be misunderstood...]}\\
%The quantum set in the vicinity of the Tsirelson point\\
%All Bell inequalities maximally violated by the Tsirelson point\\
%A Dual Perspective on the Set of Quantum Statistics\\
Extremal Tsirelson inequalities}

\author{Victor Barizien}
\affiliation{Université Paris Saclay, CEA, CNRS, Institut de physique théorique, 91191 Gif-sur-Yvette, France}
\author{Jean-Daniel Bancal}
\affiliation{Université Paris Saclay, CEA, CNRS, Institut de physique théorique, 91191 Gif-sur-Yvette, France}

\date{\today}

\begin{abstract}
It is well-known that the set of statistics that can be observed in a Bell-type experiment is limited by quantum theory. Unfortunately, tools are missing to identify the precise boundary of this set. Here, we propose to study the set of quantum statistics from a dual perspective. By considering all Bell expressions saturated by a given realization, we show that the CHSH expression can be decomposed in terms of extremal Tsirelson inequalities that we identify. This brings novel insight into the geometry of the quantum set in the (2,2,2) scenario. Furthermore, this allows us to identify all the Bell expressions that are able to self-test the Tsirelson realization.
\end{abstract}

\maketitle

%\section{Introduction}
\textit{Introduction --} Quantum physics predicts the existence of statistics in a Bell-type experiment which are non-local in the sense that they can violate a Bell inequality~\cite{Brunner14}. The observation of this striking physical property has raised awareness on the importance of the statistical distributions that can be observed in a Bell scenario upon measurement of a quantum system, which form the set of quantum behaviors -- or simply the \emph{quantum set} $\mathcal{Q}$. By defining statistics that are experimentally realizable, this set plays a central role in quantum information science, with applications ranging from foundational questions to device-independent information protocols~\cite{Hardy93,Mayers04,Pironio10a,Nadlinger22,Zhang22a,Liu22a}. %The set of local behaviors, which are compatible with a local hidden variable model, is contained in the QS and it is itself contained in the set of non-signaling behaviors, compatible with the principles of special relativity~\cite{?}. 
Indeed, every quantum realization, consisting of a density matrix and local measurements on some Hilbert spaces of arbitrary dimension, generates statistics according to Born's rule which belong to the quantum set. Given a statistical distribution, it is however generally a difficult question to determine which quantum realization may generate it, or even whether the distribution belongs to the quantum set in the first place.

The Navascués-Pironio-Acín (NPA) hierarchy of semidefinite programming offers a tool to tackle this question in terms of a family of statistical sets converging to the quantum set from the outside~\cite{Navascues07,Navascues08}. However, this construction involves additional variables whose value is a priori unknown, and the level of the hierarchy which must be reached in order to provide a definitive answer is unknown even in the simplest Bell scenario. %Another path, motivated by the convexity property of the quantum set, aims at describing it in terms of its extremal points. 
Another approach is motivated by the convexity of the quantum set and involves decomposing quantum behaviors in terms of extremal points. %Finding these extremal quantum behaviors is an exciting research program, but their complete description has yet to be discovered~\cite{}. 
Despite much effort, a complete description of extremal quantum behaviors remains to be discovered~\cite{Masanes03,Ishizaka18,Mikosnuszkiewicz23}. Here, we take a different path, also enabled by $\mathcal{Q}$'s convexity property, which consists in studying its dual set $\mathcal{Q}^*$.

The dual of the quantum set encodes the Tsirelson bounds of all possible Bell inequalities~\cite{Tsirelson87,Wehner06}. Describing it is therefore as challenging as finding the quantum bound of an arbitrary Bell inequality, but also as important, since quantum bounds play a key role in many quantum information results~\cite{Pawlowski09,Navascues10,Supic20}. The duality perspective already brought insight into the local and no-signaling sets, which are the two other major sets of interest in Bell-type experiments. Namely, for Bell scenarios with binary inputs and outputs it was shown that the local set, describing statistical distributions compatible with a local hidden variable model, is dual to the no-signaling set, whose behaviors are only limited by the condition that the parties cannot learn each other's inputs~\cite{Fritz12}. In other words, every extremal (or `tight') Bell inequality in this scenario is in one-to-one correspondence with an extremal point of the no-signaling polytope.

Unlike its local and no-signaling counterparts, the quantum set is not a polytope and little is known about its dual picture. In the simplest Bell scenario exhibiting the non-local property of quantum physics, with 2 parties, 2 inputs and 2 outputs, the quantum set belongs to a space of dimension 8. A first result concerns the subset $\mathcal{Q}_\text{c}$ with uniformly random marginal statistics, corresponding to a subspace of dimension 4. It was recently shown that this subset is self-dual, i.e.~$\mathcal{Q}_\text{c}\cong\mathcal{Q}_\text{c}^*$~\cite{Le23,Fritz12}. This striking property sets the quantum set apart from both the local and the no-signaling sets. %It also reduces the study of the quantum set and its dual to the same question.
In fact, the analytical descriptions of $\mathcal{Q}_\text{c}$ and $\mathcal{Q}_\text{c}^*$ are fully known within this subspace: a first explicit description of the quantum set in the subspace of vanishing marginals was provided in~\cite{Tsirelson87,Landau88,Masanes03}; see also~\cite{Le23,Barizien23,Wooltorton23} for explicit descriptions of its (isomorphic) dual. 

%Apart from this strong result for a special case, surprisingly little seems to be known about the dual quantum set $\mathcal{Q}^*$. Numerical studies suggested, though, that the Tsirelson point, which is the unique quantum realization maximally violating the Clauser-Horne-Shimoney-Holt (CHSH) inequality~\cite{?}, may maximize more than one Bell expression~\cite{Goh18}. This may be a key towards understanding whether $\mathcal{Q}$ and $\mathcal{Q}^*$ are  being complementary but really different objects.

Here, we study the dual of the quantum set in the full 8-dimensional space. Specifically, we determine analytically all elements of the dual which are related to the Tsirelson point, the unique quantum point maximally violating the Clauser-Horne-Shimony-Holt (CHSH) inequality~\cite{Clauser69}. This allows us to describe for the first time a complete face of the dual quantum set $\mathcal{Q}^*$'s boundary. In turn, this provides a tight first order description of the quantum set around this maximally nonlocal point.

%Here, we consider the dual of the quantum set within this Bell scenario in the full 8-dimensional space. In particular, we study analytically the vicinity of the Tsirelson point, which is the unique quantum point maximally violating the Clauser-Horne-Shimoney-Holt (CHSH) inequality~\cite{Clauser69}. By describing the quantum set to first order around this maximally non-local point, we provide a complete description of a region of $\mathcal{Q}^*$.

%The simplest Bell scenario one can study that exhibits the non-local properties of quantum physics is the 2 parties, 2 inputs, 2 outputs scenario~\cite{?}, for which the search of extremal points is reduces to the quantum realizations of local dimension two, with a pure state and projective measurements~\cite{?}. Further works have proven that one can even consider realizations with a partially state $\ket{\phi_\theta} = c_\theta \ket{00}+ s_\theta \ket{11}$ and measurements lying in the $(\sigma_z, \sigma_x)$-plane for both parties~\cite{Jed}. The set of points admitting a real quantum realization of local dimension two $\mathcal{Q}_2$ is therefore a subset of interest of the set of all quantum points $\mathcal{Q}$. However, it is not convex and the points of $\mathcal{Q}_2$ that are extremal in $\mathcal{Q}$ are yet to be found~\cite{?}.

%\section{Dual of the Quantum Set}
\textit{Dual of the quantum set --} In a bipartite Bell experiment, two parties obtain outcomes $a$ and $b$ upon performing measurements $x$ and $y$ respectively. A behavior $P(ab|xy)$ in this scenario belongs to the quantum set $\mathcal{Q}$ iff there exists Hilbert spaces $\mathcal{H}_A$, $\mathcal{H}_B$, a density matrix $\rho\geq 0$ with $\tr\rho=1$ acting on their tensor product $\mathcal{H}_A\otimes\mathcal{H}_B$, and POVMs $M_{a|x}, N_{b|y}\succeq 0$ with $\sum_a M_{a|x}=\openone$, $\sum_b N_{b|y}=\openone$ such that $P(ab|xy)=\tr(M_{a|x}\otimes N_{b|y} \rho)$. Since the dimensions of the Hilbert spaces $\mathcal{H}_A$ and $\mathcal{H}_B$ are not bounded, any convex mixture $\lambda \bm P_1 + (1-\lambda) \bm P_2$ with $\lambda\in[0,1]$ of two behaviors $\bm P_1$ and $\bm P_2$ in $\mathcal{Q}$ can be obtained by combining the two corresponding realizations into larger Hilbert spaces, and therefore $\mathcal{Q}$ is convex~\cite{Slofstra19}.

% In the case where the numbers of outcomes and of measurement choices are binary, i.e.~$a,b,x,y\in\{0,1\}$, it can be shown that every extremal point in the quantum set can be obtained by considering Hilbert spaces $\mathcal{H}_A$ and $\mathcal{H}_B$ of dimension two~\cite{Mikosnuszkiewicz23}, i.e.~$\mathcal{Q}=Conv(\mathcal{Q}_2)$. In this case, the quantum set is also compact. 

% From a geometric point of view, it is convenient to consider the statistics $\bm P \in \mathcal{Q}$ as rays in the quantum cone $\mathcal{C}$, defined as the set of behaviors of the form \cref{eq:Born}, where the normalization condition for the density matrix is relaxed to $\tr\rho\geq0$. 

By taking into account the normalization and no-signaling conditions, the 16 conditional probabilities $P(ab|xy)$ can be expressed simply in terms of 8 linearly independent ones~\cite{Collins04,Rosset20}, which can be represented by the corresponding table of correlators
\begin{equation}
    \bm P = \begin{array}{c|c|c}
         1 & \langle B_0\rangle & \langle B_1\rangle \\
         \hline
         \langle A_0\rangle & \langle A_0 B_0\rangle & \langle A_0 B_1\rangle \\
         \hline
         \langle A_1\rangle & \langle A_1 B_0\rangle & \langle A_1 B_1\rangle
    \end{array}.
\end{equation}
Here, $A_x=M_{0|x}-M_{1|x}$ and $B_y=N_{0|y}-N_{1|y}$ are observables with $\pm 1$ eigenvalues. This allows one to define the dual of the quantum set $\mathcal{Q}^*$ in $\mathbb{R}^8$ as the set of all Bell expressions $\beta$ whose quantum maximum is smaller than a constant, e.g.~1 (see also \cref{sec:indices}):
\begin{equation}
\mathcal{Q}^* = \{\beta\in\mathbb{R}^8: \beta\cdot\bm P \leq 1,\ \forall \bm P\in\mathcal{Q}\}.
\end{equation}
Since the double dual of a cone is the closure of the initial cone, the description of $\mathcal{Q}^*$ is equivalent to the description of $\mathcal{Q}$ itself, and any insight on $\mathcal{Q}^*$ is an insight on $\mathcal{Q}$ as well.

Note that all elements of $\mathcal{Q}^*$ with
\begin{equation}
\beta\cdot \bm P=1
\label{eq:condition}
\end{equation}
for some $\bm P\in\mathcal{Q}$ are Bell expressions defining supporting hyperplanes of the quantum set $\mathcal{Q}$. Such inequalities provide a description of the quantum set around the point $\bm P$ to first order. More generally, the dual of the quantum set $\mathcal{Q}^*$ being convex, it admits extremal points, which are of particular interest. An example of an extremal point of $\mathcal{Q}^*$ is the positivity constraint $P(ab|xy)\geq0$. However, this point is not specific to the quantum dual as it is shared with every other physically meaningful dual, including the local and no-signaling duals. In the remainder of this manuscript, we are going to identify non-trivial extremal points of $\mathcal{Q}^*$.

%\section{Local description of the Tsirelson realization}

% Let us define what we mean by describing the quantum set locally. Let say you have a quantum point $\bm P_{T} \in \mathcal{Q}_2$ that lies on the border of the quantum set. Since the quantum set is a convex set, a way to describe it locally is to describe all supporting hyperplanes $H$ of $\mathcal{Q}$ at $\bm P$. In the case of the quantum set, supporting hyperplanes are called Bell inequalities. Thus we aim to find all Bell inequalities $\beta$ such that: 
% \begin{equation}
% \label{eq:condition}
%     \beta \in \mathbb{R}^8,\ \left\{ \begin{split}
%         & \forall \bm P \in \mathcal{Q}, \  \beta \cdot \bm P \leq 1, \\
%         & \beta \cdot \bm P_{T} = 1
%     \end{split}\right.
% \end{equation}
% Finding all such Bell inequatities gives a first order approximation of the quantum set in the neighborhood of the point of study. Mathematically speaking, this describes the dual cone of $\mathcal{Q}$ at $\bm P_{T}$, thus characterizing many of its properties as a point of the border of $\mathcal{Q}$~\cite{?}.

\textit{The Tsirelson behavior --} The Tsirelson point is given by the following table of correlators
\begin{equation} \label{eq:point}
    \bm P_{T} = \begin{array}{c|c|c}
         1 & 0 & 0 \\
         \hline
         0 & \frac{1}{\sqrt{2}} & \frac{1}{\sqrt{2}} \\
         \hline
         0 & \frac{1}{\sqrt{2}} & -\frac{1}{\sqrt{2}}
    \end{array}.
\end{equation}
This point is particularly remarkable because it is the only point in $\mathcal{Q}$ that achieves the maximal quantum value of the CHSH Bell inequality $\langle \beta_\text{CHSH} \rangle\leq 2$~\cite{Cirelson80}. Furthermore, this point is extremal in $\mathcal{Q}$, and it can only be realized by performing complementary measurements on a maximally entangled state, i.e.~the point $\bm P_T$ self-tests the quantum realization~\cite{Supic20}:
\begin{equation}
\label{eq:realization}
\begin{split}
& \ket{\phi^+} = \frac{1}{\sqrt{2}}(\ket{00}+\ket{11}),\\
& A_x = \frac{Z_A + (-1)^x X_A}{\sqrt{2}}, \quad B_0 = Z_B, \ B_1 = X_B.
\end{split}
\end{equation}

As shown in \cref{eq:point}, the Tsirelson point has vanishing marginals, and when considering $\mathcal{Q}_\text{c}$, the quantum set within the subspace of vanishing marginals, i.e.~with $\langle A_x \rangle = \langle B_y \rangle=0$, it is known that this point is only exposed by the CHSH inequality. In other words, the hyperplane $\langle\beta_{\text{CHSH}}\rangle = 2\sqrt{2}$ is the only linear function of $\{\langle A_x B_y\rangle\}_{x,y}$ such that $H\cap \mathcal{Q}_\text{c} = \{\bm P_{T}\}$. Recent numerical results suggest however that this may not be the case outside the subspace of vanishing marginals~\cite{Goh18}. In the following, we identify all Bell expressions that are maximized by the Tsirelson point.

\textit{Bell expressions for the Tsirelson point --} In general, all possible Bell expressions can be parametrized by eight real coefficients $a_x, b_y, c_{xy}$ for $x,y\in\{0,1\}$ and be written in terms of formal polynomials~\cite{Barizien23} as
\begin{equation}
\begin{split}
    \beta & = a_0 A_0 + a_1 A_1 + b_0 B_0 + b_1 B_1 \\
    &+ c_{00} A_0B_0 + c_{10} A_1B_0 + c_{01} A_0B_1 + c_{11} A_1B_1.
\end{split}
\end{equation}
In order to find restrictive conditions that ensure that $\beta$ has quantum bound $1$ and verifies~\cref{eq:condition} for the Tsirelson point $\bm P_T$, we make use of the variational method~\cite{Barizien23,Pal14,Sekatski18} and consider the Bell operator corresponding to these polynomials for the choice of measurements of \cref{eq:realization}. In general, this operator is given by
\begin{equation}
\begin{split}
    \hat S = & p_1 Z_A + p_2 X_A + p_3 Z_B + p_4 X_B \\
    & \ + p_5 Z_AZ_B + p_6 X_AX_B + p_7 Z_AX_B + p_8 X_AZ_B,
\end{split}
\end{equation}
where parameters $p_r$ are linear combinations of parameters $a_x,b_y,c_{xy}$.

For any given state, the value of the Bell inequality is then given by $\bra{\psi} \hat S \ket{\psi}$. Since $\hat S$ is a real hermitian operator, its eigenvalues are real and $\beta \in \mathcal{Q}^*$ imposes that all of its eigenvalues are smaller than~$1$. The condition \cref{eq:condition} for $\bm P_T$ then implies that $\ket{\phi^+}$ is an eigenstate of $\hat S$ of eigenvalue~$1$, i.e.~$\hat S \ket{\phi^+}=\ket{\phi^+}$, which grants the following conditions on the parameters $p_r$:
\begin{equation}
    \left\{ \begin{split}
        & p_1 + p_3 = 0, \\
        & p_2 + p_4 = 0, \\
    \end{split}\right., \quad \left\{ \begin{split}
        & p_5 + p_6 = 1, \\
        & p_7 - p_8 = 0.
    \end{split}\right.
\end{equation}
Taking these equations into account, we can now rewrite a new parametrization:
\begin{equation}
\begin{split}
    \beta & = r_0 \left(\frac{A_0+A_1}{\sqrt{2}} - B_0\right) + r_1 \left(\frac{A_0-A_1}{\sqrt{2}} - B_1\right)\\
    & \qquad + r_2 \left(\frac{A_0+A_1}{\sqrt{2}}B_1 + \frac{A_0-A_1}{\sqrt{2}}B_0\right) \\
    & \qquad + \lambda \frac{A_0+A_1}{\sqrt{2}} B_0 + (1-\lambda) \frac{A_0-A_1}{\sqrt{2}} B_1,
\end{split}
\end{equation}
where $r_0, r_1, r_2, \lambda \in \mathbb{R}$ are the remaining free parameters.

%\subsection{Perturbative restriction}
\textit{Perturbative restriction --} Now, remember that we require that no other quantum point gives a value larger than~$1$ for this Bell expression. In particular, the function $\beta \cdot \bm P_{\theta,a_x,b_y}$ should admit a local maxima at $\bm P_{T}$, where
\begin{equation}
    \bm P_{\theta,a_x,b_y} = \begin{array}{c|c|c}
         1 & c_{2\theta}c_{b_1} & c_{2\theta}c_{b_2} \\
         \hline
         c_{2\theta}c_{a_1} & \multicolumn{2}{c}{\multirow{2}{*}{$c_{a_x}c_{b_y}+s_{2\theta}s_{a_x}s_{b_y}$}} \\
         \cmidrule{1-1}
         c_{2\theta}c_{a_2} & \multicolumn{2}{c}{}
    \end{array}, \quad \theta,a_x,b_y \in \mathbb{R}
\end{equation}
are the statistics resulting from measuring the two-qubit state $\ket{\phi_\theta}=c_\theta\ket{00}+s_\theta\ket{11}$ in the $Z-X$ plane, and $c_\varphi := \cos(\varphi)$, $s_\varphi := \sin(\varphi)$ denote the cosine and sine functions. This condition gives a set of five linear equations:  
\begin{equation}
\begin{split}
    0 = \beta \cdot \frac{\partial \bm P_{\theta,a_x,b_y}}{\partial \theta}= \beta \cdot \frac{\partial \bm P_{\theta,a_x,b_y}}{\partial a_x} = \beta \cdot \frac{\partial \bm P_{\theta,a_x,b_y}}{\partial b_y}
\end{split}
\end{equation}
which reduce to
\begin{equation}
\lambda = 1/2, \quad r_2 = 0.
\end{equation}
The search space for the Bell inequalities is thus reduced to
\begin{equation}
\label{eq:bellfamily}
\begin{split}
    \beta_{r_0,r_1} = r_0& \left(\frac{A_0+A_1}{\sqrt{2}} - B_0\right) + r_1 \left(\frac{A_0-A_1}{\sqrt{2}} - B_1\right)  \\
    & + \frac{1}{2\sqrt{2}}\beta_{\text{CHSH}}, \quad r_0, r_1 \in \mathbb{R},
\end{split}
\end{equation}
where $\beta_{\text{CHSH}}= (A_0+A_1)B_0 + (A_0-A_1)B_1$. As shown in \cref{sec:orderTwo}, further order perturbations can be considered to reduce the range of the parameters $r_0,r_1$, but it turns out to be more restrictive at this stage to eliminate parameters based on the local bound of the Bell expressions. Indeed, any $\beta_{r_0,r_1}$ with a local bound larger than~$1$ also admits a quantum value larger than~$1$ (and hence larger than the value provided by measuring the $\ket{\phi^+}$ state).

%\subsection{Local bound}
\textit{Local bounds --} Since the convex combination of two Bell expressions with a local bound smaller than~$1$ also has a local bound smaller than~$1$, the set of Bell expressions $\beta_{r_0,r_1}$ with a local bound smaller than~$1$ forms a convex region of the $r_0,r_1$ plane. Furthermore, the local maxima of a Bell expression is reached at one of the 16 extremal points of the local polytope. These points are given by
\begin{equation}
    \bm L_{ijkl} = \begin{array}{c|c|c}
         1 & i & j \\
         \hline
         k & ik & jk \\
         \hline
         l & il & jl
    \end{array}, \quad i,j,k,l \in \{-1,1\}.
\end{equation}
The convex region of expressions of the form \cref{eq:bellfamily} with a local maxima smaller than~$1$ is thus given by all points $(r_0,r_1)$ satisfying the conditions $\beta_{r_0,r_1}\cdot \bm L_{ijkl} \leq 1$.
% \begin{equation}
%     r_0 \left(\frac{k+l}{\sqrt{2}} - i\right) + r_1 \left(\frac{k-l}{\sqrt{2}} -j\right) \leq 1-\frac{1}{2\sqrt{2}}((k+l)i + (k-l)j).
% \end{equation}
The intersection of these half planes defines a polytope, namely a regular octagon, whose eight summit are given by (see \cref{fig:dual})
\begin{equation}
    \{\left(1-\frac{1}{\sqrt{2}}\right)R_{\frac{\pi}{4}}^k(1,0), \ k\in \llbracket 0,7 \rrbracket \},
\end{equation}
where $R_{\frac{\pi}{4}}$ is the rotation of angle $\pi/4$ in the $(r_0,r_1)$ plane. Any Bell expression outside this octagon has a local bound larger than~$1$, and thus is not maximized by $\bm P_T$.

%\subsection{Quantum bounds}
\textit{Quantum bounds --} Having excluded a range of Bell expressions $\beta_{r_0,r_1}$ from their local bound, we need to compute the Tsirelson bound of the expressions inside the octagon. To formalize this problem, let us consider the algebra of quantum operators $\mathcal{R}$ made of arbitrary products of $A_x$ and $B_y$. The only generating rules of this algebra are that, for all $x,y$, $A_x^2=B_y^2=\id$ and $[A_x,B_y]=0$. We know that a sufficient condition to have $\beta \preceq 1$ is that $1-\beta$ is a sum of squares (SOS) in $\mathcal{R}$:
\begin{equation}
\label{SOS}
    1-\beta = \sum_s O_s^\dagger O_s, \quad O_s\in\mathcal{R}.
\end{equation}
This is known as an SOS relaxation~\cite{Ioannou22,Liang07}. However, the search space $\mathcal{R}$ is of infinite dimension. Since an SOS decomposition of the form of Eq.~(\ref{SOS}) where $O_s$ are restricted to a set subspace $\mathcal{T}\subset\mathcal{R}$ still provides a valid bound, a common approach to tackle this problem consists in considering operators $O_s$ within a chosen relaxation level $\mathcal{T}$, such as the set of all polynomials of $A_x$ and $B_y$ of a given degree. But even this quickly results in a large problem.

To further reduce the SOS search space, let us identify a relevant subspace of $\mathcal{T}$ in which the operators $O_s$ should be chosen. Let's consider the case where a SOS decomposition exists. For the implementation of the Tsirelson realization, this would imply
\begin{equation}
    0 = \sum_s \bra{\phi^+} \Bar{O_s}^\dagger \Bar{O_s} \ket{\phi^+} = \sum_s ||\Bar{O_s} \ket{\phi^+}||^2,
\end{equation}
where $\Bar{O_s}$ is the specific implementation of the operator $O_s$ using measurements of~\cref{eq:realization}. Since all terms on the right-hand side of the above equation are positive, this implies that for all $s$, $\Bar{O_s} \ket{\phi^+}=0$, i.e. that all $\Bar{O_s}$ are nullifying operators of $\ket{\phi^+}$~\cite{Bamps15, Barizien23}. This condition restricts the operators $O_s$ to
\begin{equation}
    \mathcal{A}_\mathcal{T} = \{ O \in \mathcal{T} : \Bar{O}\ket{\phi^+} = 0\}.
\end{equation}

For a finite relaxation $\mathcal{T}$, let's consider a generating sequence $\{N_s\}_s$ of $\mathcal{A}_\mathcal{T}$ and denote by $\vec N$ the vector of elements $N_s$. All elements in $\mathcal{A}_\mathcal{T}$ can thus be written as $\Vec{w}\cdot \vec N$ where $\vec w$ is a real vector. A valid SOS decomposition in $\mathcal{T}$ can then be written as
\begin{equation}
    1-\beta = \sum_s O_s^\dagger O_s = \vec N^\dagger \sum_s \Vec{w}_s^\dagger \Vec{w}_s \vec N = \vec N^\dagger \cdot W \cdot \vec N,
\end{equation}
where $W = \sum_s \Vec{w}_s^\dagger \Vec{w}_s$ is a positive matrix. We see that the problem of obtaining an SOS decomposition, \cref{SOS}, reduces to finding whether there exists a positive matrix $W \succeq 0$ such that
\begin{equation}\label{eq:1minusBeta}
1-\beta = \vec N^\dagger \cdot W \cdot \vec N.
\end{equation}
If such a matrix $W$ can be found, we say it is a certificate of the inequality $\beta$.

% Note that for a given symmetric matrix $W = (w_{kl})_{k,l} \in \mathcal{M}_2(\mathbb{R})$, the SOS is given by
% \begin{equation}
%     \vec N^\dagger \cdot W \cdot \vec N = \sum_i w_{ii} N_i^\dagger N_i + \sum_{k<l} w_{kl} \{ N_k, N_l \},
% \end{equation}
% where $\{ N_k, N_l \} = N_k^\dagger N_l + N_l^\dagger N_k$. The condition $1-\beta=\vec N^\dagger \cdot W \cdot \vec N$ thus imposes linear conditions on the coefficients $w_{kl}$ and the positivity constraint $W\succeq 0$ remains to be checked.

Different relaxations $\mathcal{T}$ could be considered here. The first order relaxation $\mathcal{T}_{1+A+B} = \{ 1, A_0,A_1,B_0,B_1\}$ only gives a certificate for the CHSH inequality. The relaxation at the almost quantum level~\cite{Navascues15}, $\mathcal{T}_{1+A+B+AB} = \mathcal{T}_1 \cup \{A_x B_y, \ x,y \in \{0,1\} \}$ can also be computed numerically and gives a certificate for a disk in the $(r_0,r_1)$ plane of center $(0,0)$ and radius $\frac{1}{4\sqrt{2}}$, c.f.~\cref{sec:orderTwo}. The next relaxation is given by
\begin{equation}
    \mathcal{T}_{1+A+B+AB + ABB'} = \mathcal{T}_{1+AB} \cup \{A_x B_y B_{y'}, y\neq y'\}.
\end{equation}
We can show analytically that a certificate can be found for the inequality $\beta_{1-1/\sqrt{2},0}$ at this level of relaxation (see \cref{sec:relaxation}). This ensures that the Bell expression
\begin{equation}
    \beta_T = \left(1-\frac{1}{\sqrt{2}}\right)\left( \frac{A_0+A_1}{\sqrt{2}} - B_0\right) + \frac{\beta_\text{CHSH}}{2\sqrt{2}}
\end{equation}
is maximized by the Tsirelson point. Since we already concluded that a larger value of $r_0$ admits a local value larger than~$1$, this bound for $r_1=0$ is the best we could have hoped for. One can check that this inequality is an extremal point of $\mathcal{Q}^*$, that it is exposed, in the sense that the quantum set admits a point which only saturates this Tsirelson inequality (see \cref{sec:indices}), and that it is only maximized by $3$ extremal points of $\mathcal{Q}$: $\bm P_T$ and two deterministic realizations (see \cref{sec:noOtherPoint}).

To analyze the rest of the octagon, we make use of some symmetries of the problem. Both the family of Bell expressions $\beta_{r_0,r_1}$ and the set $\mathcal{Q}$ are preserved by several discrete symmetries. One of them is described by the following action:
\begin{equation}
\label{symmetries}
        S : \ (r_0,r_1) \to R_{\frac{\pi}{4}}(r_0,r_1),\ \left\{\begin{split} & A_0 \to -B_1, \ A_1 \to -B_0, \\
        & B_0 \to -A_0, \ B_1 \to A_1.
    \end{split}\right.
\end{equation}
Due to this symmetry, the quantum bound of any Bell expression with parameters $(r_0,r_1) \in \mathbb{R}^2$ can be computed by looking at the bound of the inequality with parameters rotated by $\pi/4$. This ensures that the quantum bounds of the eight inequalities $S^k \cdot \beta_T$ of polar coordinates ${(1-\nobreak\frac{1}{\sqrt{2}}, k\frac{\pi}{4})}$ for $k\in \llbracket 0,7 \rrbracket$ are also~$1$. Therefore, the octagon is exactly the convex region of quantum bound equal to $1$. This completes the characterization of the slice $\beta_{r_0,r_1}$ (see \cref{fig:dual}).

\begin{figure}
    \centering
    \includegraphics[width=0.5\textwidth]{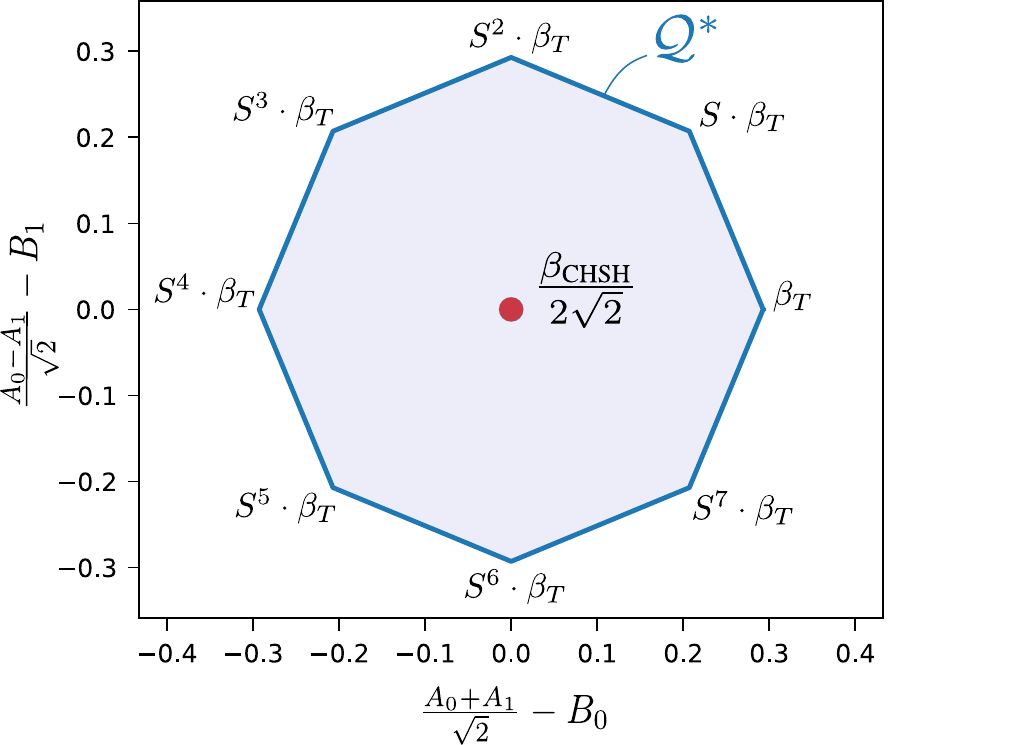}
    \caption{Face of $\mathcal{Q}^*$ in the two-dimensional affine slice defined by $\beta_{r_0,r_1}$ for real parameters $r_0,r_1$. The red point in the middle, the normalized CHSH expression, is non-extremal: it can be decomposed in terms of the eight summits of the octagon, which are extremal Tsirelson inequalities.}
    \label{fig:dual}
\end{figure}

Interestingly, the CHSH inequality lies in the middle of this dual face. Therefore, the CHSH inequality is not an extremal Tsirelson inequality. In particular, we can write it as the convex mixture
\begin{equation}\label{eq:CHSHdecomp}
\beta_\text{CHSH}=\frac{1}{2}\left(\beta_T + S^4 \cdot \beta_T\right).
\end{equation}
Note that this description is not unique because $\beta_\text{CHSH}$ lies on a face of $\mathcal{Q}^*$ of dimension 2.

From the point of view of the quantum set, this means that the Tsirelson point $\bm P_T$ is an exposed extremal point of $\mathcal{Q}$ with dimension pair $(0,2)$, i.e.~with a face dimension of 0 and a dual dimension of 2 (see \cref{sec:indices}). Furthermore, it is exposed by all the inequalities on the inside of the octagon (see \cref{fig:3dproj}). In fact, any Bell expression inside the octagon can be written as a convex combination of the CHSH expression and an expression $\beta_b$ on the border of the octagon: $\beta = p \beta_{\text{CHSH}}/2\sqrt{2} + (1-p) \beta_b$, where $p \in (0,1]$. If a point $\bm P$ verifies $\beta\cdot \bm P=1$, then it implies $\beta_\text{CHSH} \cdot \bm P= 2\sqrt{2}$ and the self-testing result of $\beta_\text{CHSH}$ implies that $\bm P = \bm P_T$. From the self-testing point of view, this means that any inequality inside the octagon self-tests the quantum realization of \cref{eq:realization} associated to the Tsirelson point. As far as inequalities on the border of the octagon are concerned, those are also maximized by local points and as such cannot provide a self-test of the realization.
\begin{figure}
    \centering
    \includegraphics[width=0.47\textwidth]{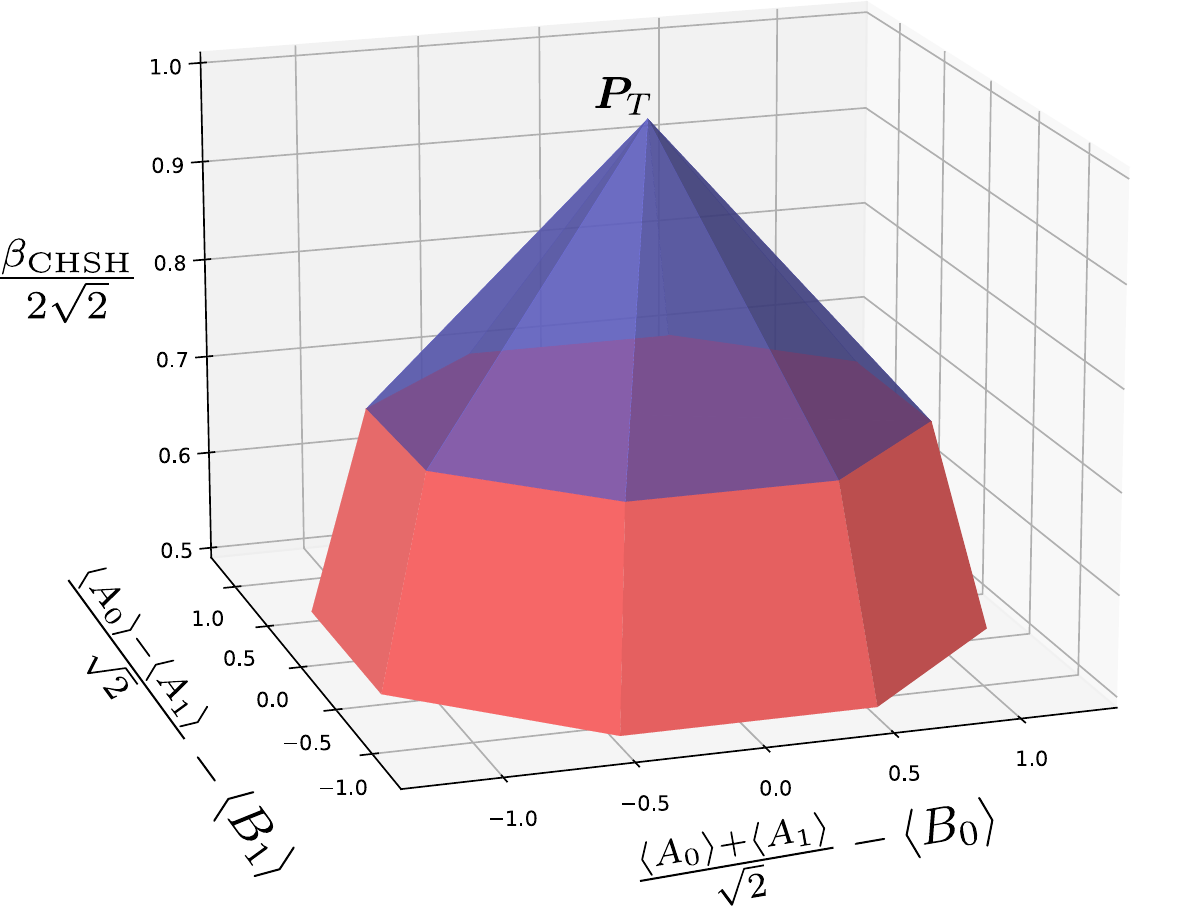}
    \caption{Three-dimensional projection of the local polytope (in red) and of the quantum set of correlations (red and blue). The only point reaching the z-value of $1$ is the Tsirelson realization. This point lies on top of an octagonal-based pyramid whose eight facets correspond to the inequalities $S^k \cdot \beta_T$.}
    \label{fig:3dproj}
\end{figure}

%\section{Discussion and outlooks}
\textit{Conclusion --} In this paper, we studied the quantum set $\mathcal{Q}$ from a dual perspective. In particular, we derived constructively all the Bell expressions that the Tsirelson point $\bm P_T$ maximizes. This provides fresh insight on the geometry of the quantum set. In particular, we show analytically that $\bm P_T$ is an extremal point of $\mathcal{Q}$ of dual dimension 2 that lies at the top of a pyramid. We identify 8 new exposed extremal points of $\mathcal{Q}^*$, all of dual dimension 2 as well, thus fully describing a face of $\mathcal{Q}^*$ of dimension 2. In turn, this allows us to describe all the Bell expressions that are able to self-test the Tsirelson realization. It would be interesting to find out whether it is a generic property of extremal quantum statistics to have a non-zero dual dimension.

Our work also sheds light on the relation between $\mathcal{Q}$ and $ \mathcal{Q}^*$. In \cite{Fritz12}, a map was introduced to prove that $\mathcal{L}^* \cong \mathcal{NS}$ and it was proven that this map also sends the subset $\mathcal{Q}_c$ with uniformly random marginals statistics to its dual. However, since this map sends the extremal point $\bm P_T$ onto $\beta_\text{CHSH}$, which admits the decomposition~\cref{eq:CHSHdecomp}, it cannot be used to map $\mathcal{Q}$ to $\mathcal{Q}^*$. While other maps could be found, analysis such as ours might help proving that the quantum set $\mathcal{Q}$ is not self-dual, making $\mathcal{Q}^*$ a set of physical interest with a possibly very different geometrical structure.

%Quite surprisingly, these inequalities also have a local bound equal to $1$. This means that the extremal points of the convex dual $\mathcal{Q}^*$ in this specific slice lie on the border of the dual of the local polytope $\mathcal{L}^*$. Numerical simulations suggest this is true for other extremal points of $\mathcal{Q}$, thus leading to the following conjecture:
%\begin{Conjecture}
%    \ \\
%    \textit{(1) All extremal points of $\mathcal{Q}-\mathcal{L}$ are of orthogonal dimension at most 2} \\
%    \textit{(2) All extremal points of $\mathcal{Q}^*$ are on the border of $\mathcal{L}^*$}
%\end{Conjecture}

\begin{acknowledgements}
We acknowledge funding by Commissariat à l’Energie Atomique et aux Energies Alternatives (CEA).
%, the European Union’s Horizon 2020 research and innovation program European High-Performance Computing Joint Undertaking under grant agreement No 101018180 (HPCQS) and a French national quantum initiative managed by Agence Nationale de la Recherche in the framework of France 2030 with the reference ANR- 22-PETQ-0007.
\end{acknowledgements}

\bibliographystyle{apsrev4-2}
\bibliography{biblio}{}

%%%%%%%%%%%%%%%%
\appendix
\onecolumngrid

\section{Tools for characterization of convex sets}\label{sec:indices}
Let us recall some terminology useful to discuss convex sets in $\mathbb{R}^n$. We illustrate these definitions with results obtained in this work. For more formal details, see \cite{Rockafellar70}. In the following, $\mathcal{K}$ denotes a convex set over $\mathbb{R}^n$ and $\cdot$ the canonical scalar product. 

\begin{Definition}
    For any compact set $A \subset \mathbb{R}^n$, the \textbf{affine dimension} of $A$ is the vector space dimension of the set $A$ shifted by any of its elements: $\dim(\text{Vect}\langle A-\{x\}\rangle)$ for any $x\in A$.
\end{Definition}
\begin{Definition}
    A subset $F \subset \mathcal{K}$ is a \textbf{face} of $\mathcal{K}$, denoted $F \lhd \mathcal{K}$, iff:
    \begin{equation*}
        \forall y,z \in \mathcal{K}, \ \lambda\in(0,1), \ \lambda y + (1-\lambda)z \in F \Longrightarrow y,z \in F
    \end{equation*}
\end{Definition}
\noindent Note that a face $F\lhd \mathcal{K}$ is a convex set.
\begin{Definition}
    The \textbf{face dimension} of a face $F$, denoted $d_F$, is the affine dimension of $F$. 
\end{Definition}
\begin{Definition}
    A point $x$ of $\mathcal{K}$ is an \textbf{extremal point} if $\{x\}$ is a face of $\mathcal{K}$.
\end{Definition}
\noindent Note that extremal points are exactly the faces of dimension $0$.
\begin{Example}
    $\beta_T$ is an extremal point of $\mathcal{Q}^*$. Indeed, if their exist $g,h\in \mathcal{Q}^*$, $\lambda\in \mathcal{R}$ such that $\beta_T = \lambda g + (1-\lambda)h$, then $g\cdot \bm P_T \leq 1$, $h\cdot \bm P_T \leq 1$. But since $\beta_T \cdot \bm P_T = 1$, we must have $g\cdot \bm P_T = h\cdot \bm P_T = 1$. Thus, $g$ and $h$ are in the slice $\beta_{r_0,r_1}$. However, $\beta_T$ is extremal in this slice so $g,h \in \{\beta_T\}$.
\end{Example}
\begin{Definition}
    The \textbf{dual space} of $\mathcal{K}$, denoted $\mathcal{K}^*$, is given by:
    \begin{equation*}
        \mathcal{K}^* = \{ f \in \mathbb{R}^n : \forall x \in \mathcal{K}, \ f\cdot x \leq 1 \}
    \end{equation*}
\end{Definition}
\noindent Note that the dual $\mathcal{K}^*$ is itself a convex set over $\mathbb{R}^n$. 
\begin{Definition}
    Let $F \lhd \mathcal{K}$. The \textbf{orthogonal face} of $F$ is given by: 
    \begin{equation*}
        F^\bot = \{ f \in \mathcal{K}^* : \forall x \in F, \ f\cdot x = 1 \}
    \end{equation*}
\end{Definition}
\begin{Proposition}
    The set $F^\bot$ is a face of the convex dual $\mathcal{K}^*$.
\end{Proposition}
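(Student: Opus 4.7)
The plan is to verify the two requirements of the face definition directly. First, $F^\bot$ is by construction a subset of $\mathcal{K}^*$, so it suffices to check that $F^\bot$ satisfies the defining implication of a face inside $\mathcal{K}^*$: whenever an element of $F^\bot$ is written as a non-trivial convex combination of two points of $\mathcal{K}^*$, both of those points must already lie in $F^\bot$.

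The key step is a convex-combination argument of the same flavour as the one used in the example to establish that $\beta_T$ is extremal in $\mathcal{Q}^*$. Concretely, I would pick $f \in F^\bot$ and suppose $f = \lambda g + (1-\lambda) h$ with $g,h\in\mathcal{K}^*$ and $\lambda\in(0,1)$. For every $x\in F \subseteq \mathcal{K}$, the dual membership of $g$ and $h$ gives $g\cdot x \leq 1$ and $h\cdot x \leq 1$. Linearity of the scalar product then yields
\begin{equation*}
1 = f\cdot x = \lambda\,(g\cdot x) + (1-\lambda)\,(h\cdot x),
\end{equation*}
and a strict convex combination of two real numbers, each at most $1$, can equal $1$ only if both equal $1$. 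Hence $g\cdot x = h\cdot x = 1$ for every $x\in F$, i.e.\ $g,h\in F^\bot$, which is exactly the face condition.

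I do not expect any real obstacle: the statement is a standard convex-duality fact and all the work is in the single observation that a strict convex combination of quantities bounded by $1$ saturates $1$ only when each term saturates. The only mild subtleties to mention for completeness are that $F\subseteq\mathcal{K}$ (so the inequalities $g\cdot x,h\cdot x\leq 1$ are available), and that $\lambda\in(0,1)$ is essential for the saturation argument. No appeal to the relative interior of $F$ or to supporting-hyperplane machinery is needed; the proof is a two-line deduction, which I would present right after restating the definitions of $\mathcal{K}^*$ and $F^\bot$.
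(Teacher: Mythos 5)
Your proof is correct and is essentially the same argument as the paper's: take a strict convex combination $\lambda g + (1-\lambda) h \in F^\bot$ with $g,h\in\mathcal{K}^*$, use $g\cdot x, h\cdot x\le 1$ for $x\in F\subseteq\mathcal{K}$, and conclude from saturation of the convex combination that $g\cdot x = h\cdot x = 1$. No differences worth noting.
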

\begin{proof}
    Let $F \lhd \mathcal{K}$. Let $g,h \in \mathcal{K}^*$, $\lambda \in (0,1)$ such that $\lambda g + (1-\lambda) h \in F^\bot$. For all $x\in F$, $(\lambda g + (1-\lambda) h) \cdot x = 1$. Since $g,h$ are elements of the convex dual and $x \in \mathcal{K}$, their scalar product with $x$ is upper bounded by $1$. Thus $\lambda g \cdot x + (1-\lambda) h \cdot x \leq 1$. The equality implies that $g\cdot x = h \cdot x=1$ and thus $g,h \in F^\bot$.
\end{proof}
\begin{Definition}
We call \textbf{orthogonal dimension} of $F$, denoted $d_F^\bot$, the face dimension of $F^\bot$.
\end{Definition}
\begin{Definition}
We call \textbf{dimension pair} of $F$ the couple $(d_F,d^\bot_F)$.
\end{Definition}
\begin{Example}
    In the main text, we prove that $\{\bm P_T \}^\bot$ is an octagon of summits $\{S^k \cdot \beta_T\}_{k=0,...,7}$. Therefore, the orthogonal dimension of $\{\bm P_T \}$ is 2. Since the face dimension of $\{\bm P_T\}$ is 0, its dimension pair is $(0,2)$.
\end{Example}
\begin{Definition}
    We say that a face $F\lhd \mathcal{K}$ is \textbf{exposed} if there exists an element $f\in F^\bot$ such that
    \begin{equation*}
        \forall x\in \mathcal{K}, \ f\cdot x = 1 \Longrightarrow x \in F.
    \end{equation*}
\end{Definition}
\noindent An extremal point whose face $\{x\}$ is exposed is an \textit{\textbf{exposed extremal point}}.
\begin{Example}
    $\beta_T$ is exposed in $\mathcal{Q}^*$. Indeed, consider $\bm P_\star = \frac{1}{3}(\bm P_T + \bm L_{-1,-1,-1,1} + \bm L_{-1,1,1,-1})$. This point is in $\{\beta_T\}^\bot$. Moreover, recall that for any $\beta\in\mathcal{Q}^*$ and $\bm P\in\mathcal{Q}$, $\beta \cdot \bm P\leq 1$. Therefore, any $\beta$ such that $\beta \cdot \bm P_\star=1$ must satisfy $\beta \cdot \bm P_T = \beta \cdot \bm L_{-1,-1,-1,1} = \beta \cdot \bm L_{-1,1,1,-1}=1$. Since such $\beta$ is maximized by $\bm P_T$, it is in the slice $\beta_{r_0,r_1}$. But the only point in this slice giving the value $1$ on both $\bm L_{-1,-1,-1,1}$ and $\bm L_{-1,1,1,-1}$ is $\beta_T$. Hence, $\bm P_\star$ exposes $\beta_T$.
\end{Example}
\begin{Proposition}
    Let $F\lhd \mathcal{K}$. In general, we have $d_F \leq d_{F^\bot}^\bot$. If $F$ is exposed, then $d_F = d_{F^\bot}^\bot$.
\end{Proposition}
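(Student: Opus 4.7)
The plan is to prove the two inclusions $F \subseteq F^{\bot\bot}$ (always) and $F^{\bot\bot} \subseteq F$ (when $F$ is exposed), then pass to affine dimensions. Here $F^{\bot\bot}$ should be read as the orthogonal face of $F^{\bot}$ inside $\mathcal{K}^{**}=\mathcal{K}$, which by the previous Proposition applied to $F^\bot \lhd \mathcal{K}^*$ is itself a face of $\mathcal{K}$, so its affine dimension is exactly $d^\bot_{F^\bot}$.

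First I would check $F \subseteq F^{\bot\bot}$ directly from the definitions: for $x \in F$ and any $f \in F^\bot$, the defining condition of $F^\bot$ gives $f\cdot x = 1$, which is precisely the condition placing $x$ in $F^{\bot\bot}$. Since both $F$ and $F^{\bot\bot}$ are convex subsets of $\mathbb{R}^n$, the containment $F \subseteq F^{\bot\bot}$ immediately yields $d_F \leq d_{F^{\bot\bot}} = d^\bot_{F^\bot}$, which proves the general inequality.

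For the exposed case, I would use the exposing functional $f \in F^\bot$ whose existence is guaranteed by the definition: $f\cdot x = 1$ with $x \in \mathcal{K}$ forces $x \in F$. To show $F^{\bot\bot} \subseteq F$, pick any $x \in F^{\bot\bot}$; by definition $g\cdot x = 1$ for every $g \in F^\bot$, and in particular $f\cdot x = 1$, so the exposing property of $f$ forces $x \in F$. Combined with the first inclusion this gives $F = F^{\bot\bot}$, and therefore $d_F = d_{F^{\bot\bot}} = d^\bot_{F^\bot}$.

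No computation is required: the whole argument is a double-polarity manipulation of the defining conditions. The only subtlety I anticipate is justifying that $F^{\bot\bot}$ can legitimately be viewed as a face of $\mathcal{K}$, which rests on the bipolar identity $\mathcal{K}^{**}=\mathcal{K}$ for the closed convex sets considered here together with a second application of the previous Proposition; once this is in place, both halves of the statement follow by the two-line arguments sketched above.
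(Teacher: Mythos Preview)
Your argument is correct and follows exactly the paper's own proof: establish $F\subseteq (F^\bot)^\bot$ from the definitions to get $d_F\leq d_{F^\bot}^\bot$, and in the exposed case use the exposing functional $f\in F^\bot$ to obtain the reverse inclusion $(F^\bot)^\bot\subseteq F$. Your remark about needing $\mathcal{K}^{**}=\mathcal{K}$ so that points of $(F^\bot)^\bot$ lie in $\mathcal{K}$ (and hence fall under the exposing property of $f$) is a legitimate subtlety the paper passes over silently; it is not a different approach, just a more careful acknowledgement of a hypothesis.
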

\noindent If $F$ is exposed with dimension pair $(d_F,d^\bot_F)$, the orthogonal face $F^\bot$ has opposite dimension pair $(d_{F^\bot},d^\bot_{F^\bot})=(d^\bot_F,d_F)$. In this case, we refer to $d^\bot_F$ as the \textit{\textbf{dual dimension}} of F that we denote $d^*_F$.
\begin{proof}
    By definition $d_F^\bot = d_{F^\bot}$. Now, suppose $x \in F$. For all $f\in F^\bot$, $f\cdot x=1$. Hence, $x\in (F^\bot)^\bot$ and we have $F \subset (F^\bot)^\bot$. Thus, $d_F \leq d_{F^\bot}^\bot$.
    For the second point, suppose $F$ is exposed by some $f\in \mathcal{K}^*$. Then for $x\in (F^\bot)^\bot$, we have $f\cdot x=1$. This implies $x\in F$. Thus $F=(F^\bot)^\bot$ and $d_F = d_{F^\bot}^\bot$. 
\end{proof}
\begin{Definition}
    Let $\mathcal{K}$, $\mathcal{K}'$ be two convex sets over $\mathbb{R}^n$. We say that $\mathcal{K}$ and $\mathcal{K}'$ are \textbf{isomorphic} convex sets iff there exists an isomorphism $U$ of $\mathbb{R}^n$ such that $U(\mathcal{K}) = \mathcal{K}'$.
\end{Definition}
\begin{Proposition}
     Let $\mathcal{K}$, $\mathcal{K}'$ be isomorphic convex sets under the isomorphism $U$. Then $\mathcal{K}^*$, $(\mathcal{K}')^*$ are isomorphic convex sets under the isomorphism $(U^\dagger)^{-1}$. 
\end{Proposition}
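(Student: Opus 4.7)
The plan is to unfold the two definitions and show that the bilinear form $f \cdot x$ is preserved in the expected sense under the pair of isomorphisms $U$ on the primal side and $(U^\dagger)^{-1}$ on the dual side. Concretely, I want to prove the set equality $(U^\dagger)^{-1}(\mathcal{K}^*) = (\mathcal{K}')^*$, and since $(U^\dagger)^{-1}$ is a linear isomorphism of $\mathbb{R}^n$, this set equality is exactly the statement that $\mathcal{K}^*$ and $(\mathcal{K}')^*$ are isomorphic convex sets under $(U^\dagger)^{-1}$.

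The computational core is the adjoint identity
\begin{equation*}
    \bigl((U^\dagger)^{-1} f\bigr) \cdot (U x) = f \cdot \bigl(U^{-1} U x\bigr) = f \cdot x,
\end{equation*}
valid for every $f, x \in \mathbb{R}^n$, which follows from the definition of the adjoint with respect to the canonical scalar product. First I would take $f \in \mathcal{K}^*$ and set $f' = (U^\dagger)^{-1} f$. For any $y \in \mathcal{K}' = U(\mathcal{K})$ there is a unique $x \in \mathcal{K}$ with $y = Ux$, so the identity above gives $f' \cdot y = f \cdot x \leq 1$, hence $f' \in (\mathcal{K}')^*$. This shows $(U^\dagger)^{-1}(\mathcal{K}^*) \subseteq (\mathcal{K}')^*$.

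For the reverse inclusion I would run the same argument starting from $f' \in (\mathcal{K}')^*$ and defining $f = U^\dagger f'$. For every $x \in \mathcal{K}$ the element $y = Ux$ lies in $\mathcal{K}'$, so $f \cdot x = f' \cdot y \leq 1$, which puts $f$ in $\mathcal{K}^*$ and gives $f' = (U^\dagger)^{-1} f \in (U^\dagger)^{-1}(\mathcal{K}^*)$. The two inclusions combined yield the desired set equality.

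There is no real obstacle here: the proof amounts to checking the adjoint identity and reading off both directions. The only thing to keep in mind is to verify that $(U^\dagger)^{-1}$ is itself a linear isomorphism of $\mathbb{R}^n$ (immediate since $U$ is), so that the transported set $(U^\dagger)^{-1}(\mathcal{K}^*)$ is a convex set isomorphic to $\mathcal{K}^*$ in the sense of the preceding definition.
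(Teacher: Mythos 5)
Your proof is correct and follows essentially the same route as the paper: both reduce the statement to the set equality $(U^\dagger)^{-1}(\mathcal{K}^*) = (\mathcal{K}')^*$ and establish it via the adjoint identity $U^\dagger g \cdot x = g \cdot Ux$; the paper writes this as a single chain of set rewritings, while you split it into two inclusions, which is a cosmetic difference only.
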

\begin{proof}
    This is only a game of definitions, as we need to prove that $(U^\dagger)^{-1} (\mathcal{K}^*) = (\mathcal{K}')^*$:
    \begin{equation*}
    \begin{split}
        (U^\dagger)^{-1} (\mathcal{K}^*) &= (U^\dagger)^{-1} \{ f \in \mathbb{R}^n | \forall x \in \mathcal{K}, \ f\cdot x \leq 1\} = \{ g \in \mathbb{R}^n | \exists f \in \mathbb{R}^n, \ g=(U^\dagger)^{-1} (f) \ \& \ \forall x \in \mathcal{K}, \ f\cdot x \leq 1\} \\
        & = \{ g \in \mathbb{R}^n | \forall x \in \mathcal{K}, \ U^\dagger (g) \cdot x \leq 1\} = \{ g \in \mathbb{R}^n | \forall x \in \mathcal{K}, \ g \cdot U(x) \leq 1\} \\
        & = \{ g \in \mathbb{R}^n | \forall x' \in \mathcal{K}', \ g \cdot x' \leq 1\}  = (\mathcal{K}')^* \\
    \end{split}
    \end{equation*}
\end{proof}
\begin{Proposition}
    Let $\mathcal{K}$, $\mathcal{K}'$ be isomorphic convex sets under the isomorphism $U$. Let $F \lhd \mathcal{K}$. Then:
    \begin{enumerate}
        \item $U(F)$ is a face of $\mathcal{K}'$,
        \item $U(F)$ has the same dimension pair as $F$.
    \end{enumerate}
\end{Proposition}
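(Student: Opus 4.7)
The plan is to exploit the linearity and invertibility of $U$ together with the dual isomorphism identified in the preceding proposition: since faces, affine dimensions, and orthogonal faces are all defined purely in terms of convex combinations and scalar products, they should transport cleanly along linear isomorphisms.

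For the first item, I would take $y, z \in \mathcal{K}'$ and $\lambda \in (0,1)$ with $\lambda y + (1-\lambda) z \in U(F)$. Since $U(\mathcal{K}) = \mathcal{K}'$, the preimages $U^{-1}(y)$ and $U^{-1}(z)$ lie in $\mathcal{K}$, and linearity gives
\begin{equation*}
\lambda U^{-1}(y) + (1-\lambda) U^{-1}(z) = U^{-1}\bigl(\lambda y + (1-\lambda) z\bigr) \in F.
\end{equation*}
Applying the face property of $F \lhd \mathcal{K}$ then yields $U^{-1}(y), U^{-1}(z) \in F$, so $y, z \in U(F)$, which shows $U(F) \lhd \mathcal{K}'$.

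For the second item, I would split the dimension pair and treat its two components separately. The face dimension is preserved because $U$ is a linear bijection and thus preserves the affine dimension of any subset; hence $d_{U(F)} = d_F$. For the orthogonal component, the key observation is that $U(F)^\bot = (U^\dagger)^{-1}(F^\bot)$, which follows by unwrapping definitions: $g \in U(F)^\bot$ iff $g\cdot U(x) = 1$ for all $x \in F$, iff $U^\dagger(g) \cdot x = 1$ for all $x \in F$, iff $U^\dagger(g) \in F^\bot$. Combining this identity with the previous proposition, which identifies $(U^\dagger)^{-1}$ as the isomorphism between $\mathcal{K}^*$ and $(\mathcal{K}')^*$, and using that $(U^\dagger)^{-1}$ also preserves affine dimensions, I conclude $d_{U(F)}^\bot = d_{F^\bot} = d_F^\bot$.

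I do not expect any serious obstacle here; the proof is essentially a diagram-chase. The only subtlety is bookkeeping: keeping straight which isomorphism acts on which space ($U$ on $\mathcal{K}$ and its faces, and $(U^\dagger)^{-1}$ on the dual $\mathcal{K}^*$ and the orthogonal faces), and invoking the face-transport argument of part~1 legitimately on the dual side when one wants a conceptual derivation of the orthogonal dimension identity rather than the computational one above. Once this bookkeeping is in place, every claim reduces to the definitions and to the linearity of $U$.
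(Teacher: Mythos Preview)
Your proposal is correct and follows essentially the same route as the paper: both prove part~1 by pulling convex combinations back along $U^{-1}$ and invoking the face property of $F$, and both prove part~2 by first noting that a linear isomorphism preserves affine dimension and then establishing $U(F)^\bot = (U^\dagger)^{-1}(F^\bot)$ to transfer the orthogonal dimension via the dual isomorphism from the preceding proposition. The only cosmetic difference is that the paper spells out the affine-dimension preservation via linearly independent families and a two-sided inequality, whereas you invoke it directly.
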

\begin{proof}
    For the first point, notice that $F\subset \mathcal{K}$ and thus $U(F) \subset \mathcal{K}'$. Now let $y',z'\in \mathcal{K}'$, $\lambda \in (0,1)$ such that $\lambda y' + (1-\lambda)z' \in U(F)$. Then there exists $y,z \in \mathcal{K}$ such that $y' = U(y)$, $z' = U(z)$. Now, by applying $U^{-1}$ and using linearity, we obtain $\lambda y + (1-\lambda)z \in F$. But $F \lhd \mathcal{K}$ so $y,z \in F$ and thus $y',z' \in U(F)$.

    For the second point, suppose that $x_1, .., x_r$ are linearly independent vectors of $F$. Then $U(x_1), .., U(x_r)$ are linearly independent vectors of $U(F)$ and thus $d_F \leq d_{U(F)}$. Now using this with $U^{-1}$, one gets $d_{U(F)} \leq d_{U^{-1}(U(F))} = d_F$ and finally $d_{U(F)} = d_F$. Now, for the orthogonal dimension, we can use the same reasoning as in the previous proof to show that $(U^\dagger)^{-1} (F^\bot) = U(F)^\bot$. We can write $d_F^\bot = d_{F^\bot} = d_{(U^\dagger)^{-1}(F^\bot)} = d_{U(F)^\bot}=d_{U(F)}^\bot$ using the first part of the proof between faces $F^\bot$ and $(U^\dagger)^{-1}(F^\bot)$.
\end{proof}

\section{SOS relaxation at level $\mathcal{T}_{1+A+B+AB+ABB'}$ for the inequality $\beta_T$} \label{sec:relaxation}
For a finite relaxation $\mathcal{T}$, we consider a generating sequence $\vec N$ of $\mathcal{A}_\mathcal{T}$. Given any symmetric matrix $W = (w_{kl})_{k,l} \in \mathcal{M}_n(\mathbb{R})$, the SOS is given by
\begin{equation}
    \vec N^\dagger \cdot W \cdot \vec N = \sum_i w_{ii} N_i^\dagger N_i + \sum_{k<l} w_{kl} \{ N_k, N_l \},
\end{equation}
where $\{ N_k, N_l \} = N_k^\dagger N_l + N_l^\dagger N_k$. The condition $1-\beta=\vec N^\dagger \cdot W \cdot \vec N$ thus imposes linear conditions on the coefficients $w_{kl}$ and the positivity constraint $W\succeq 0$ remains to be checked.

Now we focus on the finite relaxation  $\mathcal{T}_{1+A+B+AB+ABB'}$. The polynomials that nullify the state $\ket{\phi^+}$ in the ideal implementation \cref{eq:realization} at this level of relaxation are linear combinations of nine operators and the generating sequence can be chosen to be
\begin{equation} \label{eq:genseq}
    \vec N = \begin{pmatrix}
        N_0 \\
        N_1 \\
        N_2 \\
        N_3 \\
        N_4 \\
        N_5 \\
        N_6 \\
        N_7 \\
        N_8 
    \end{pmatrix} = \begin{pmatrix}
        \frac{A_0+A_1}{\sqrt{2}} - B_0 \\
        \frac{A_0-A_1}{\sqrt{2}} - B_1 \\
        1-\frac{A_0+A_1}{\sqrt{2}} B_0 \\
        1-\frac{A_0-A_1}{\sqrt{2}}B_1 \\
        \frac{A_0+A_1}{\sqrt{2}}B_1 + \frac{A_0-A_1}{\sqrt{2}}B_0 \\ 
        B_1 \left( 1 - \frac{A_0+A_1}{\sqrt{2}} B_0 \right) \\
        B_0 \left( 1 - \frac{A_0-A_1}{\sqrt{2}} B_1 \right) \\
        \left( 1 + \frac{A_0+A_1}{\sqrt{2}} B_0 \right) B_1 \\
        \left( 1 + \frac{A_0-A_1}{\sqrt{2}} B_1 \right) B_0
    \end{pmatrix} .
\end{equation}
Computing the 45 terms $N_i^\dagger N_i$, $\{ N_k, N_l \}$ for $k<l$, one can verify that the matrix
\begin{equation}
    W_3 \equiv \frac{1}{16} \begin{pmatrix}
        \sqrt{2}  & -2s & -s & & & & \\
        & 2s & 0 & & & & \\
        & & \sqrt{2} & & & & \\
        & & & 2 & 0 & s \\
        & & & & \sqrt{2}s & s\\
        & & & & & s
    \end{pmatrix},
\end{equation}
expressed in the basis $\{N_0, N_2, N_6, N_1, N_5, N_4\}$ and with $s=2-\sqrt{2}$, fulfills the condition 
\begin{equation}
    1-\beta_{(1-\frac{1}{\sqrt{2}})} = \vec N^\dagger W_3 \vec N.
\end{equation}
Furthermore, this matrix has 4 non-zero eigenvalues:
\begin{equation}
    \frac{1}{8}\left\{1-\sqrt{10-7\sqrt{2}},1+\sqrt{10-7\sqrt{2}}, 2-\sqrt{2}, \frac{3\sqrt{2}}{2}-1\right\}.
\end{equation} 
Since all 4 are positive, $W_3 \succeq 0$, and $W_3$ provides a certificate for the inequality $\beta_{(1-\frac{1}{\sqrt{2}},0)}:=\beta_T$.

\section{Order two variations}\label{sec:orderTwo}
The second order variations of the function $\beta \cdot \bm P_{\theta,a_x,b_y}$ can be used to rule out some inequalities of the plane $\beta_{r_0,r_1}$. Indeed, for the value $1$ of the inequality to be a true maximum at the Tsirelson realization $\bm P_T$, the hessian matrix must be negative. Now, looking at second order variations, one must be a bit careful when it comes to which points are close to the Tsirelson realization: while it can be obtained with parameters $(\pi/4, 0,\pi/2, \pi/4, -\pi/4)$, it can also be achieved by any local rotations of both parties simultaneously - $s_\alpha = (\pi/4, \alpha, \pi/2+\alpha, \pi/4 + \alpha, -\pi/4 + \alpha) $ for any $\alpha \in [0,2\pi)$. As such, the hessian matrix should be considered for any possible value $\alpha$:
\begin{equation*}
    H_{r_0,r_1,\alpha} = \left( \beta_{r_0,r_1} \cdot \left. \frac{\partial^2 \bm P_{\theta, a_x, b_y}}{\partial_i \partial_j } \right|_{s_\alpha} \right)_{i,j}
\end{equation*}
where the partial derivatives on $i,j$ run across all five parameters $\theta, a_0, a_1, b_0, b_1$. This matrix can be computed explicitly:
\begin{equation*}
    H_{r,\gamma,\alpha} = \begin{pmatrix}
        -2 & & & & \\
       2rs_\alpha s_{\gamma+\pi/4} & -1/2 & & &  \\
       -2rc_\alpha s_{-\gamma+\pi/4} & 0 & -1/2 & & \\
       -2rs_\gamma s_{\alpha+\pi/4} & 1/4 & 1/4 & -1/2 &  \\
       2rc_\gamma s_{-\alpha+\pi/4} & 1/4 & 1/4 & 0 & -1/2 
    \end{pmatrix}
\end{equation*}
where we used the polar parametrization of the plane $(r_0,r_1) = (rc_\gamma,rs_\gamma)$ and all unwritten terms are given by symmetry of the hessian matrix. Now since we know that the point $r=0$ is a true maxima and that the region of interest in the plan is convex, we can express the problem of excluding inequalities as the following SDP: 
\begin{equation*}
    \begin{split}
        \max& \quad r \\
        & \text{s.t } H_{r,\gamma,\alpha} \preceq 0 
    \end{split}
\end{equation*}
This optimization can be performed numerically. The result does not depend on the parameter $\gamma$ and is always $r_{\max} = 0.5$. The corresponding disk is represented in \cref{fig:appendix}.

\begin{figure}
    \centering
    \includegraphics[width=0.6\textwidth]{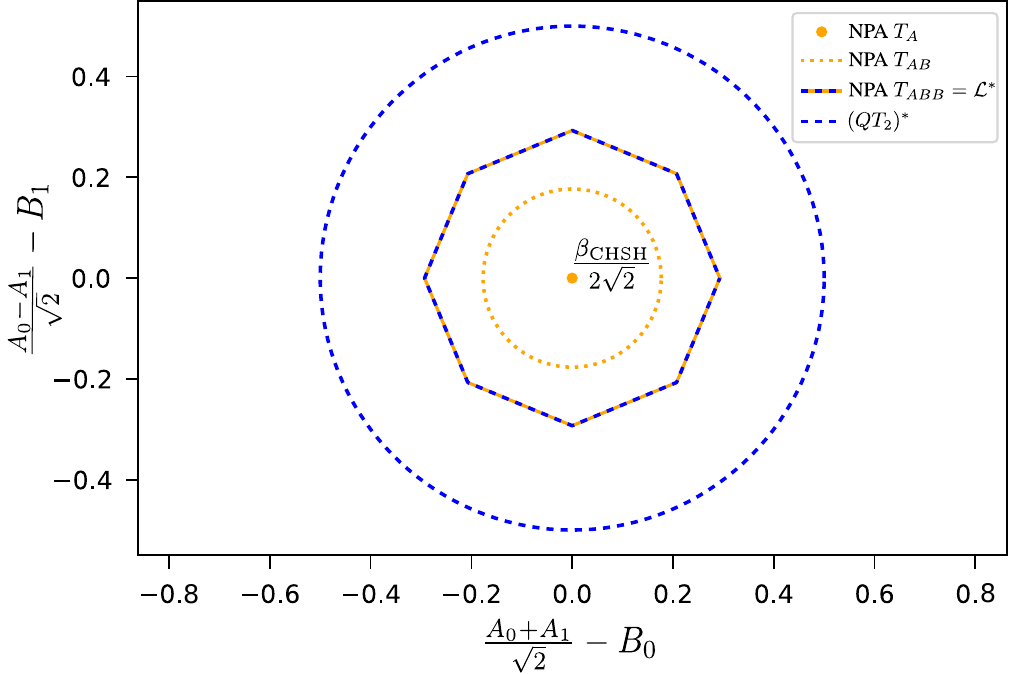}
    \caption{This figure summarizes the bounds on the dual set $\mathcal{Q}^*$ in the slice considered the main text. The blue lines are outer relaxations of the dual in the plane of inequalities $\beta_{r_0,r_1}$ while the orange lines are inner certification of the dual in this plane, obtained by the NPA hierarchy at different levels. The dotted blue circle correspond to the maximal radius for which second order variations of two-dimensional realizations doesn't increase the inequality value: every inequality outside are not in $\mathcal{Q}^*$ as small variations around $\bm P_{T}$ can grant values of $\beta_{r_0,r_1}$ larger than $1$. The dashed blue octagon correspond to the region of inequalities having a local bound of $1$: all inequalities outside this region are not in $\mathcal{Q}^*$ as they are not in $\mathcal{L}^*$. On the other hand, the orange dot in the middle, the dotted orange circle and the dashed orange octagon correspond to certifications of quantum bound equal to $1$ using the NPA hierarchy for the relaxations $T_{A}=\mathcal{T}_{1+A+B}$, $T_{AB}=\mathcal{T}_{1+A+B+AB}$ and $T_{ABB}=\mathcal{T}_{1+A+B+AB+ABB'+AA'B}$ respectively. Note that the bound for the relaxation level $\mathcal{T}_{AB}$ was obtained numerically. The orange and blue octagon coincide, meaning that the set $\mathcal{Q}^*$ in this slice is exactly the region inside the octagon.}
    \label{fig:appendix}
\end{figure}

\section{The extremal points on the face $\beta_T$}\label{sec:noOtherPoint}
The extremal Tsirelson inequality $\beta_T$ forms part of the boundary of the quantum set $\mathcal{Q}$. In turn, all the quantum behaviors $\bm P$ which saturate this inequality, i.e. for which $\beta_T\cdot \bm P=1$, define a face of the quantum set, which is a convex set in itself. In this section, we find all the extremal points of this face.

Due to \cite{Mikosnuszkiewicz23}, the search of extremal points can be reduced to the points $\bm P_{\theta,a_x,b_y}$ achieved with a partially entangled two qubits state $\ket{\phi_\theta}=c_\theta\ket{00}+s_\theta\ket{11}$, $\theta\in[0,\pi/4]$, and real measurements in the $Z-X$ plane. Two cases can be treated directly. The first one is $\theta = \pi/4$. This implies that all marginal terms $\langle A_x \rangle$, $\langle B_y \rangle$ are $0$ and thus $\langle \beta_\text{CHSH} \rangle = 2\sqrt{2}$ and the only possible realization is $\bm P_T$. The second case is $\theta =0$ as this implies that the point is local. One can check that there are only two extremal local points reaching the value $1$ for $\beta_T$. We now prove that there are no other extremal points on this face.

For this, we consider all possible quantum realization giving $\beta_T\cdot \bm P_{\theta,a_x,b_y}=1$. We have the SOS decomposition $1-\beta_T = N^\dagger W_3 N = \sum_{k=0}^8 w_k O_k^\dagger O_k$, where $w_k$ are the eigenvalues of $W_3$ and $O_k$ are the elements of the relaxation whose coefficients in terms of the generating sequence \cref{eq:genseq} are given by the coefficients of the eigenvectors of $W_3$ in the canonical basis. Therefore, when value $1$ is reach for the inequalities, all operators $\hat O_k$ associated to a non-zero eigenvalue should nullify the measured state: 
\begin{equation*}
    w_k \neq 0\  \Longrightarrow \ \Hat O_k \ket{\phi_\theta} = 0 
\end{equation*}
Since $W_3$ has 4 non-zero eigenvalues, we obtain 4 equations. Using linear combinations of these equations, one can derive $(\hat N_0 - \hat N_2)\ket{\phi_\theta}=0$. Projecting this equation onto all subspaces $\ket{ij}$ gives:
\begin{subequations}
\begin{equation} \label{eq:eq1}
    c_\theta \frac{c_{a_0}+c_{a_1}}{\sqrt{2}} - c_\theta c_{b_0} = c_\theta - \frac{1}{\sqrt{2}}\sum_x (c_\theta c_{a_x}c_{b_0} + s_\theta s_{a_x}s_{b_0})
\end{equation}
\begin{equation} \label{eq:eq2}
    - s_\theta \frac{c_{a_0}+c_{a_1}}{\sqrt{2}} + s_\theta c_{b_0} = s_\theta - \frac{1}{\sqrt{2}}\sum_x (s_\theta c_{a_x}c_{b_0} + c_\theta s_{a_x}s_{b_0})
\end{equation}
\begin{equation} \label{eq:eq3}
    s_\theta \frac{s_{a_0}+s_{a_1}}{\sqrt{2}} - c_\theta s_{b_0} = - \frac{1}{\sqrt{2}}\sum_x (c_\theta c_{a_x}s_{b_0} - s_\theta s_{a_x}c_{b_0}) 
\end{equation}
\begin{equation} \label{eq:eq4}
    c_\theta \frac{s_{a_0}+s_{a_1}}{\sqrt{2}} - s_\theta s_{b_0} = - \frac{1}{\sqrt{2}}\sum_x (-s_\theta c_{a_x}s_{b_0} + c_\theta s_{a_x}c_{b_0}) 
\end{equation}
\end{subequations}
Now a linear combination of \cref{eq:eq3} and \cref{eq:eq4} gives
\begin{equation} \label{eq:nullifprojected}
    c_{2\theta} s_{b_0} = - \frac{1}{\sqrt{2}}\sum_x (-c_{a_x}s_{b_0} + s_{2\theta} s_{a_x}c_{b_0}).
\end{equation}
Furthermore, we can use the fact that the value of $\beta_T\cdot \bm P_{\theta,a_x,b_y}$ is maximal to ensure that first order variations of the measurements does not increase the value at first order. In particular: 
\begin{equation}
    0 = \beta_T \cdot \frac{\partial P_{\theta,a_x,b_y}}{\partial b_0} = - (1-\frac{1}{\sqrt{2}})c_{2\theta}s_{b_0} + \frac{1}{2\sqrt{2}} \sum_x (-c_{a_x}s_{b_0} + s_{2\theta} s_{a_x}c_{b_0})
\end{equation}
The two previous equations are only compatible when $c_{2\theta} s_{b_0} = 0$. Therefore, we have either $b_0 = 0$ or $b_0 = \pi$. In the first case, \cref{eq:eq1} implies $c_{a_0}+c_{a_1} = \sqrt{2}$ and in the second case, \cref{eq:eq2} implies $c_{a_0}+c_{a_1} = -\sqrt{2}$. In all cases $\left( \frac{A_0+A_1}{\sqrt{2}}-B_0\right) \cdot \bm P_{\theta,a_x,b_y} = 0$. Thus $\beta_T \cdot P_{\theta,a_x,b_y} = 1$ implies $\beta_\text{CHSH} \cdot P_{\theta,a_x,b_y} = 2\sqrt{2}$ and the only compatible point is $\bm P_T$.

\end{document}